
\documentclass[12pt]{article}

\usepackage{etex} 

\usepackage[margin=3cm]{geometry}
\usepackage{amsmath}
\usepackage{amssymb}
\usepackage{amsthm}
\usepackage{fancyhdr}
\usepackage{verbatim}
\usepackage{mathtools}
\usepackage{color}
\usepackage[utf8]{inputenc}

\numberwithin{equation}{section}

\theoremstyle{plain}
\newtheorem{theorem}{Theorem}[section]
\newtheorem{definition}[theorem]{Definition}
\newtheorem{lemma}[theorem]{Lemma}

\theoremstyle{definition}
\newtheorem{remark}[theorem]{Remark}
\newtheorem{example}[theorem]{Example}

\newcommand{\Z}{\mathbb Z}
\newcommand{\N}{\mathbb N}

\begin{document} 

\title{Sturmian ground states in classical lattice-gas models} 
\author{Aernout van Enter \\ Bernoulli Institute, Nijenborgh 9 \\ Groningen University, \\9747AG, Groningen, Netherlands\\
a.c.d.van.enter@rug.nl
\\ \\ Henna Koivusalo \\ Faculty of Mathematics, University of Vienna,\\ Oskar-Morgenstern-platz 1, 1090 Vienna, Austria \\henna.koivusalo@univie.ac.at
\\ \\ Jacek Mi\c{e}kisz \\ Institute of Applied Mathematics and Mechanics \\ University of Warsaw \\ Banacha 2, 02-097 Warsaw, Poland \\ miekisz@mimuw.edu.pl} 
\pagenumbering{arabic} 

\baselineskip=20pt
\maketitle 

\begin{abstract}
We construct for the first time examples of non-frustrated, two-body, infinite-range, one-dimensional classical lattice-gas models without periodic ground-state configurations. 
Ground-state configurations of our models are Sturmian sequences defined by irrational rotations on the circle. We present minimal sets of forbidden patterns 
which define Sturmian sequences in a unique way. Our interactions assign positive energies to forbidden patterns and are equal to zero otherwise.
We illustrate our construction by the well-known example of the Fibonacci sequences.
\end{abstract}

\section{Introduction}

Since the discovery of quasicrystals \cite{shechtman}, one of the fundamental problems in statistical mechanics is to construct microscopic models of interacting atoms or molecules for which there exist thermodynamically stable, non-periodic, quasicrystalline equilibrium phases. Here we discuss one-dimensional, classical lattice-gas models without periodic ground-state configurations and with unique translation-invariant measures supported by them. In such systems, called uniquely ergodic, all (to be precise almost all) ground-state configurations locally look the same. It is known that one-dimensional systems without periodic ground-state configurations require infinite-range interactions \cite{bundangnenciu,schulradin,thirdlaw}. On the other hand, every uniquely ergodic measure is a ground-state measure of some classical lattice-gas model \cite{Aub, Rad}, but in general these might entail arbitrarily-many-body interactions. 

One-dimensional two-body interactions producing only non-periodic ground-state configurations were presented in \cite{bakbruinsma,aubry2}. Hamiltonians in these papers consisted of strictly convex two-body repelling interactions between particles and a chemical potential favoring particles. The competition between two-body interactions and the chemical potential (a source of frustration for the particles) then gives rise to what is known as a devil's staircase for the  density of particles in the ground state as a function of the chemical potential - the set of chemical potentials for which ground states have irrational density of particles is a Cantor set. 

In \cite{tmhamiltonian}, a non-frustrated, infinite range, exponentially decaying four-body Hamiltonian was constructed, with the unique ground-state-measure supported by Thue-Morse sequences. Here we present non-frustrated two-body (augmented by some finite-range interactions) Hamiltonians producing exactly the same ground states as in the frustrated model of \cite{bakbruinsma,aubry2,jedmiek1,jedmiek2}. These are the first examples of  classical-lattice gas models with such a property, the main result of this paper.

We would also like to understand what are the most important differences in the  non-periodic spatial order present in the Thue-Morse and in the Sturmian sequences, of which Fibonacci sequences are the best known examples, with respect to their stabilities.
  
To do so we discuss spatial order in one-dimensional bi-infinite sequences of two symbols, 0 and 1. The most ordered ones are of course the periodic ones. Every periodic sequence is characterized by a finite pattern, that is an assignment of symbols to a finite number {\bf p} of consecutive sites of {\bf Z}, which is repeated to the right and to the left; 
{\bf p} then  of course is the period of  given sequence. Here we are concerned with {\it non-periodic} sequences which are in some sense "most ordered" or "least non-periodic". 
Various definitions of "order" have been put forward in the mathematical literature. In particular, Sturmian systems (symbolic dynamical systems with minimal complexity)  
and balanced systems have been extensively considered, see e.g. \cite{AlSh,BaGr,fogg} and references therein. In the physics literature, most-homogeneous sequences have appeared as ground states, that is minimal-energy configurations, in certain systems of interacting particles: one-dimensional analogues of Wigner lattices \cite{hubbard}, 
the Frenkel-Kontorova model \cite{aubry1,aubry4}, the Falicov-Kimball model of itinerant electrons \cite{lemberger} where actually the term "most-homogeneous" was introduced, 
and classical lattice-gas models \cite{bakbruinsma,aubry2,aubry3,jedmiek1,jedmiek2,ultimate}. We will show here that these three notions (Sturmian, most homogeneous, balanced) 
are equivalent. We will also show that such configurations have the property of quick convergence of pattern frequencies to equilibrium values which is also called 
the strict boundary condition \cite{peyriere,strictboundary,gambaudo}. The importance of this property for stability of non-periodic ground states is discussed in \cite{strictboundary}. 

The sequences considered here give rise to uniquely ergodic 
dynamical systems. Namely, when we take any such sequence and form an infinite orbit under lattice translations, 
then the closure of this orbit supports a unique translation-invariant ergodic measure. It follows that {\em all} (rather than  almost all)  
sequences in the support of this measure look locally the same - they have the same frequencies of all finite patterns. Such systems are called {\em uniquely ergodic}. Sequences with a single defect, which are not in their orbit closure, are therefore excluded; we obtain in this way a strictly ergodic -minimal and uniquely ergodic- system. See e.g. \cite{EM2}.

In the case of configurations on $d$-dimensional lattices, $d \geq 2$, an important class of uniquely ergodic systems consists of \emph{dynamical systems 
(subshifts) of finite type} (``SOFTs''). In such systems, all configurations in the support of an ergodic measure are uniquely characterized 
by a {\em finite} family of forbidden patterns. Typical examples here are two-dimensional tiling systems \cite{robinson,shepardgrunbaum} where forbidden patterns consist 
of two neighboring square tiles with decorated edges which do not match. 

As noted before, it can be shown that one cannot have one-dimensional dynamical systems of finite type of which the support contains only non-periodic configurations \cite{bundangnenciu,schulradin,thirdlaw}. Here we show that 
Sturmian systems can be uniquely characterized by an {\em infinite} set of forbidden distances between 1's, augmented by some finite-range condition involving 0's 
(for example the absence of three consecutive 0's is part of the characterization in the case of the Fibonacci system). These are exactly the forbidden distances in the most-homogeneous description of a given Sturmian system.     

Once we find a characterization of a uniquely ergodic measure by such a ``minimal" set of forbidden patterns, we may then construct 
a relatively simple Hamiltonian which has this measure as its unique translation-invariant ground state. This implies that the configurations in its support,  
which are ground-state configurations, have minimal energy density (and moreover, we cannot decrease their local energy by a local perturbation). 
We simply assign in this construction positive energies to forbidden patterns and zero energy otherwise. 

We emphasize that our aim of getting a ``minimal" set of interactions is to have no more than two-body interactions in the infinite set of interactions we will always need. We achieve this aim, up to a single extra term. We also mention that our aim is to find out what general properties are needed from interactions to generate non-periodic order. The interaction examples we find lay no claim to being physically realistic; rather they show -and{/}or constrain- what the possibilities are. 

It is known that Sturmian sequences (most-homogeneous sequences) are ground-state configurations of frustrated interactions, as we mentioned before - repelling interactions between particles (1's in sequences) and a chemical potential favoring particles \cite{bakbruinsma,aubry2,jedmiek1,jedmiek2}. Here we construct Hamiltonians which are not frustrated and have Sturmian sequences as ground-state configurations. By combining different interaction terms in frustrated models, or by using the general results of \cite{Aub,Rad}, non-frustrated interactions might be found producing the same ground states. However, in general such constructions will not provide pair interactions. Our main new result therefore shows that in one dimension non-periodic order can occur for non-frustrated pair interactions.
   
In Section 2, we discuss various notions of order in non-periodic sequences and show their equivalence. Section 3 contains a proof that Sturmian sequences satisfy
the strict-boundary condition for all finite patterns. In Section 4 we uniquely characterize Sturmian systems (most-homogeneous configurations) 
by the absence of 1's at certain distances (augmented by the absence of some finite-range patterns). In Section 5, Sturmian systems are seen as ground states 
of certain non-frustrated Hamiltonians in classical-lattice gas models. A discussion follows in Section 6. 

Warning: As the issues we discuss have been treated in different scientific communities (e.g. ergodic theory, condensed matter physics, computer science), different terms for the same object occur. Thus an infinite Sturmian word is an infinite symbol sequence is an infinite-volume particle configuration is an infinite one-dimensional tiling, etc. Different interpretations suggest also different generalizations, such as varying the number of symbols, the dimension, etc.  As our question originated in physics (what is needed to produce non-periodic order) but the answer draws on mathematics, we will use sometimes different  terms, originating  from those different sources. We trust this will not lead to misunderstandings.

\section{Order in non-periodic sequences}

We will consider here families of bi-infinite non-periodic one-dimensional sequences of two symbols $\{ 0, 1\}$, which are such that all members of a given family look locally the same. 
Let $X \in \Omega=\{0,1\}^{\Z}$ and let $T$ be a shift operator, that is $(TX)(j)=X(j-1)$. We assume that $X$ is such that the closure (in the product topology) of the orbit 
$\{T^{i}(X), i=1,2,...\}$ supports a unique ergodic probability measure. Such a measure, $\rho$, is a limit of normalized sums of point probabilities, 

\begin{equation}\label{eq:ergodic}
\rho = \lim_{n \to \infty} \frac{1}{2n+1}\sum_{k-n \leq i \leq k+n } \delta_{T^{i}(X)},
\end{equation}
where $\delta_{T^{i}(X)}$ is a probability measure assigning probability $1$ to the configuration $T^{i}(X)$, 
and the limit is uniform with respect to $k \in \Z$.

It means that any local pattern appears with the same frequency in all sequences in the orbit closure. 
In particular, every local pattern present in $X$ appears again within a bounded distance. This property was named ``weak periodicity" in \cite{Aub}.
In Section 4, we will discuss the rate of convergence of pattern frequencies to their equilibrium values.

First we will discuss various concepts of regularity and complexity of non-periodic sequences.

\begin{definition}
The {\bf factor complexity} of an infinite word $X \in \Omega$ is the function $p_{n}$ counting the number of its {\bf factors} (finite subwords) of length $n$.
\end{definition}

It is a classical fact (see e.g. \cite{morsehedlung}) that if $p_{n} \leq n$ for some $n$, then $X$ is eventually periodic (one-way periodic beginning from some $i \in {\bf Z}$). 
It is thus the case that for each $n$ and each non-periodic word $X$ we have $p_{n} \geq n + 1$. The words with this minimal factor complexity have a special name. 

\begin{definition}
An infinite word $X$ is called {\bf Sturmian} if $p_{n} = n + 1$ for every $n$.  Taking a Sturmian word $X$, and then the closure (in the product topology) of its orbit $(T^n(X))_{n=1}^\infty$ gives a dynamical system, which we can further equip with the unique ergodic measure obtained as the limit \eqref{eq:ergodic}. We call this system the {\bf Sturmian (dynamical) system}. 
\end{definition}

Another concept of order is given in the following definition.

\begin{definition}
Denote by $|x|$ the length of a finite word $x$, and by $x(a), a=0,1$ the number of occurrences of the symbol $a$ in $x$. 
A set of words $SW$ is {\bf balanced }
if for every $x, y \in SW$ with $|x| = |y|$ one has $|x(a) - y(a)| \leq 1.$
A bi-infinite word $X \in \{0,1\}^{\Z}$ is balanced if all its factors are balanced.
\end{definition}

Balanced sequences are also called {\bf two-distance} sequences \cite{pleasants}.

We now quote the following theorem \cite[Theorem 6.1.8]{fogg}. 
\begin{theorem}\label{thm:equivalent Sturmian}
Let $X \in \{0,1\}^{\Z}$. The following conditions are equivalent:
\begin{itemize}
\item[(i)] $X$ is Sturmian and not eventually periodic 
\item[(ii)] $X$ is balanced.
\end{itemize}
\end{theorem}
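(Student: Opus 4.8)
The plan is to derive both implications from a single combinatorial device, the \emph{balance lemma}: a bi-infinite word $X$ fails to be balanced if and only if there is a (possibly empty) palindrome $w$ such that both $0w0$ and $1w1$ belong to the set $F(X)$ of factors of $X$. Granting this lemma, I would connect the forbidden pair $\{0w0,1w1\}$ to the factor complexity $p_n$ through the standard binary bookkeeping $p_{n+1}-p_n=\#\{\text{right-special factors of length }n\}$, where a factor $u$ is \emph{right-special} when both $u0$ and $u1$ are factors (each factor of length $n$ has one or two right extensions, so the increment counts exactly the doubly-extendable ones).

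The easy half of the balance lemma is immediate: if $0w0$ and $1w1$ are both factors, they have equal length but $|\,(0w0)(1)-(1w1)(1)\,|=2$, so $X$ is unbalanced. For the hard half I would argue by minimal counterexample. Assume $X$ is unbalanced and take factors $u,v$ with $|u|=|v|$ and $|u(1)-v(1)|\ge 2$ of least possible length. Minimality forces $u$ and $v$ to begin with different letters and to end with different letters, since stripping a common outer letter would shorten them without changing the count difference. A short check of the four first/last-letter combinations, together with the observation that if $u=1p1$ and $v=0q0$ with $p(1)-q(1)=1$ then the length-$(|u|-1)$ prefixes $1p,\,0q$ already realize an imbalance of $2$, collapses everything to $u=1p1$, $v=0q0$ with $|p|=|q|$ and $p(1)=q(1)$. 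I would then iterate this reduction as an induction, peeling matching outer letters from $p$ and $q$, to extract a common central palindrome $w$ with $0w0,1w1\in F(X)$. Producing the palindrome, rather than merely two equal-weight words, is the first place I expect genuine work.

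For the implication (ii)$\Rightarrow$(i) I argue the contrapositive via complexity. If $X$ is balanced, I claim it has at most one right-special factor of each length: two distinct right-special factors $u\neq v$ of length $n$ have a longest common suffix $w$ of length $<n$, and the two distinct letters preceding $w$ in $u$ and in $v$, combined with right-specialness, force all four words $0w0,0w1,1w0,1w1$ into $F(X)$; in particular $0w0,1w1\in F(X)$, so by the balance lemma $X$ is unbalanced, a contradiction. Hence $p_{n+1}-p_n\le 1$ for all $n$; since $X$ is aperiodic the Morse--Hedlund bound already quoted gives $p_n\ge n+1$, and as $p_1=2$ this yields $p_n\le n+1$, so $p_n=n+1$ and $X$ is Sturmian.

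For (i)$\Rightarrow$(ii) I use the contrapositive once more. If $X$ is Sturmian then $p_{n+1}-p_n=1$ for every $n$, so there is a unique right-special factor of each length; applying this to the reversed word, whose complexity is unchanged, gives a unique left-special factor of each length, and hence at most one bispecial factor per length. Because the second difference of $p_n$ vanishes identically, this unique bispecial factor $w$ must be ordinary, i.e.\ exactly three of its four extensions $0w0,0w1,1w0,1w1$ occur. Now if $X$ were unbalanced, the balance lemma would supply a palindrome $w$ with $0w0,1w1\in F(X)$, whence exactly one of $0w1,1w0$ is a factor. The crux is to contradict this: the factor language of a Sturmian word is closed under reversal, so the palindromicity of $w$ makes $0w1$ a factor if and only if $1w0$ is, which is incompatible with exactly one occurring. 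The main obstacle is thus twofold: the inductive extraction of the palindrome in the balance lemma, and the input that the Sturmian language is reversal-invariant. The latter I would establish separately from the uniform recurrence (minimality) of Sturmian systems together with their palindromic richness, rather than from the complexity count alone, since the count by itself does not exclude the offending three-extension configuration.
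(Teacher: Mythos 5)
First, a point of reference: the paper does not prove this statement at all --- it is quoted from Fogg [Theorem 6.1.8] --- so there is no in-paper argument to compare yours against. Your outline follows the classical Coven--Hedlund/Lothaire route (balance lemma plus counting of special factors), which is a sensible choice, and several pieces are sound: the easy half of the balance lemma, the argument that a balanced word has at most one right-special factor of each length (via the four extensions $0w0,0w1,1w0,1w1$ of the longest common suffix), and the bilateral-extension count forcing exactly three of the four extensions of a bispecial factor to occur. The palindrome extraction in the balance lemma is only sketched, but that part is standard and fixable. There are, however, two genuine gaps. In (ii)$\Rightarrow$(i) you invoke ``since $X$ is aperiodic'' to get $p_n\ge n+1$ from Morse--Hedlund, but aperiodicity is part of the conclusion (i), not of hypothesis (ii), and it cannot be derived from balancedness: the periodic word $(01)^{\Z}$ is balanced yet has $p_n=2$ for all $n\ge 1$. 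As literally stated the implication is therefore not provable; Fogg's own formulation keeps ``not eventually periodic'' attached to the balanced side, and you would at least need to say explicitly that this is the statement you are proving.

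The more serious gap is in (i)$\Rightarrow$(ii), where the crux of your contradiction is that the factor language of a Sturmian word is closed under reversal. You propose to obtain this from ``uniform recurrence (minimality)\dots together with\dots palindromic richness,'' but neither input is established, and both are themselves nontrivial pieces of Sturmian structure theory that are standardly \emph{derived from} the balance property or from the rotation coding --- exactly what you are trying to prove --- so as it stands the step is circular. Note also that the hypothesis ``not eventually periodic'' must do its work precisely here: the word $\dots 000111\dots$ has $p_n=n+1$, the empty palindrome $w$ satisfies $0w0,1w1\in F(X)$, and reversal-invariance fails for it ($01$ is a factor while $10$ is not), so any proof of reversal-closure must use aperiodicity in an essential way that you have not indicated. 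The standard argument sidesteps this entirely: one shows directly that a minimal-length palindrome $w$ with $0w0,1w1\in F(X)$, combined with $p_{|w|+1}=|w|+2$, forces $X$ to be eventually periodic, contradicting (i). Replacing your reversal-invariance step by that argument (or supplying an independent, non-circular proof of reversal-closure) is what is needed to close the proof.
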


Note that in the above theorem, non-periodic and 
 Sturmian sequences in (i) is not enough, in view of the example of the sequence with $0$'s on negative integers 
and $1$'s on non-negative integers which is both Sturmian and non-periodic but not balanced.
\vspace{2mm}

In the physics literature \cite{hubbard,lemberger,aubry1,aubry4,bakbruinsma,aubry2,aubry3,jedmiek1,jedmiek2,ultimate} the following concept of homogeneity was considered:

\begin{definition}
Let $X \in \{0,1\}^{\Z}$ and $x_{i} \in \Z$ be the position of the $i$-th $1$ in the configuration $X$.
$X$ is {\bf most homogeneous} if there exists a sequence of natural numbers $d_{j}$ such that 
$x_{i+j} - x_{i} \in \{d_{j}, d_{j} + 1\}$ for every $i \in \Z$ and $j \in \N$. 
\end{definition}
\begin{remark}
It trivially follows that asymptotically the average distance between two particles equals $D=\lim_{j \rightarrow \infty} \frac{1}{j} d_j$. 
The ``most homogeneous" condition implies that not only  the distance between two particles with $k-1$ particles between them will be approximately  $Dk$, 
but that it will be close to that value  up to very small, bounded, fluctuations. Fluctuations of local patterns in most-homogeneous configurations are discussed in Section 3.
\end{remark}

\begin{theorem}\label{thm:equivalent homogeneous}
A sequence $X \in \{0,1\}^{\Z}$ is balanced if and only if it is most homogeneous.
\end{theorem}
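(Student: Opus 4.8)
The plan is to prove each implication by contraposition, after first rewriting both properties as \emph{two-value} statements and then manufacturing an explicit pair of windows (respectively an explicit collection of $1$'s) that witnesses the failure of the other property. Throughout I may assume that $1$'s occur infinitely often in both directions, since otherwise $X$ is eventually constant on one side and both conditions hold trivially; this is exactly what makes the labelling $x_i$, $i\in\Z$, and the notions ``the first $1$ after a window'' and ``the last $1$ before a window'' well defined. The two reformulations I will use are: $X$ is balanced iff for every length $n$ the number of $1$'s ranges over at most two consecutive values as a window of length $n$ slides along $X$; and $X$ is most homogeneous iff for every $j$ the span $x_{i+j}-x_i$ ranges over at most two consecutive values as $i$ varies (one may then take $d_j=\min_i(x_{i+j}-x_i)$).

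For the implication that balanced implies most homogeneous, I would argue by contradiction: suppose some $j\ge 1$ admits indices $i,i'$ with $x_{i'+j}-x_{i'}\ge (x_{i+j}-x_i)+2$, and set $L=x_{i+j}-x_i$. The window $A=[x_i,x_{i+j}]$ has length $L+1$ and contains exactly the $j+1$ ones $x_i,\dots,x_{i+j}$. Against it I place the window $B=[x_{i'}+1,\,x_{i'}+L+1]$, also of length $L+1$; since $x_{i'+j}-x_{i'}\ge L+2$, the window $B$ sits strictly between the consecutive $1$'s $x_{i'}$ and $x_{i'+j}$, so it can only contain ones among $x_{i'+1},\dots,x_{i'+j-1}$, hence at most $j-1$. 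Then $A$ and $B$ are equal-length factors whose $1$-counts differ by at least $2$, contradicting balancedness.

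For the converse I again argue by contradiction, starting from two equal-length factors $u,v$ of length $n$ with $1$-counts $p$ and $q\ge p+2$. Reading off the $p$ ones inside $u$ together with the last $1$ before $u$ and the first $1$ after $u$ gives $p+2$ consecutive ones whose extreme positions differ by at least $n+1$ (the outer ones lie at positions $\le a-1$ and $\ge a+n$ if $u$ occupies $[a,a+n-1]$); by most homogeneity this span lies in $\{d_{p+1},d_{p+1}+1\}$, forcing $d_{p+1}\ge n$. On the other hand the first $p+2$ of the $q\ge p+2$ ones inside $v$ fit within a window of length $n$, so their span is at most $n-1$, and most homogeneity forces $d_{p+1}\le n-1$. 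These two inequalities are contradictory, which completes the proof.

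The routine parts are the index bookkeeping and the degenerate constant cases; the one step that needs genuine care---and where I expect the argument to be most delicate---is the bound on the $1$-count of the auxiliary window $B$ in the first direction, and symmetrically the correct choice of ``boundary'' ones in the second direction, because both hinge on getting the $\pm1$ offsets at the window edges exactly right so that the counts differ by a full $2$ rather than only by $1$.
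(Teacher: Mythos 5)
Your proof is correct and follows essentially the same route as the paper's: both directions are argued by contradiction using the very same window constructions (the closed interval between two $1$'s with $j-1$ ones strictly between them, against an equal-length window placed strictly inside a wider such gap; and the two contradictory bounds $d_{p+1}\ge n$ and $d_{p+1}\le n-1$ extracted from the boundary $1$'s around $u$ and the first $p+2$ ones inside $v$). The only blemishes are cosmetic --- $x_{i'}$ and $x_{i'+j}$ are not ``consecutive'' $1$'s, and the claim that the case of finitely many $1$'s is ``trivial'' is not quite accurate --- but that degenerate case lies outside the definition of most homogeneous (which presupposes the $1$'s are indexed by $\Z$) and is likewise ignored by the paper.
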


\begin{proof} 

1) Let us assume that $X$ is not most homogeneous. Then we will show that it is not balanced.

It follows from the assumption that there is $j \in \N$ and two words in $X$ with 1's at their boundaries, and  $j-1$ 1's in between them,  
such that the distances between the two boundary 1's are $d_{j}$ and $d_{j} + i$ respectively,  with $i \geq 2$. 
(Notice that the lengths of these words then are $d_{j}+1$ and $d_{j} +i+1$.) Consider the following two subwords of the above words,   
of length $d_{j}+1$:
\vspace{1mm}
 
a) including the positions of two boundary 1's in the $d_{j}$ case, the number of 1's in such a word is equal to $j+1$,
\vspace{1mm}

b) excluding the positions of two boundary 1's in the $d_{j}+i$ case, the number of 1's in such a word is not bigger than $j-1$.
\vspace{1mm}
 
The numbers of 1's in these two words differ by at least 2. This shows that $X$ is not balanced.
\vspace{2mm}
 
2) Now let us assume that $X$ is not balanced and we will show that it is not most homogeneous.
\vspace{1mm}
 
Since $X$ is not most balanced, for some $n$ and $j$ there are two words of length $n$, such that there are $j$ 1's in the first word, $V$, and $j+i$, $i \geq 2$, 1's in the second word, $W$.

Firstly, we find a subword of $X$ such that it contains the word $V$, ends and begins with $1$'s, and the number of $1$'s between the first and the last $1$ is exactly $j$. (Essentially, use $10\dots 0V0\dots 01$, adding the appropriate number of $0$'s in between to make the word legal.) But then the distance between two $1$'s at the beginning and end is at least $n+1$. Hence, in the definition of most homogeneous, $d_j\ge n$.

On the other hand, consider a subword of $X$ contained in the second word $W$, beginning and ending with $1$'s which have exactly $j$ $1$'s between them. Then the distance between the beginning and the end cannot be bigger than $n-1$. This implies that in the definition of most homogeneous, $d_j \le n-1$, which is a contradiction. It follows that $X$ is not most homogeneous. 
\end{proof}

We have therefore shown that the Sturmian property is equivalent to the most-homogeneous property. We can also see the correspondence between Sturmian 
and most homogeneous systems in a direct way.

\begin{remark}\label{rem:rotations}
It is well-known (see, e.g. \cite{morsehedlung} or \cite[Theorem 10.5.8]{AlSh}) that Sturmian systems can be generated by rotations on a circle. Any such system can be associated with an irrational $\gamma<1$.
Namely, let $\psi \in [0,2\pi)$ and let $T_{\gamma}$ be a rotation on a circle by $2\pi \gamma$. We can construct a sequence $X_{\psi}$ in the following way: 
$X_{\psi}(i) = 0$ if $T_{\gamma}^{i} (\psi) \in [0, 2\pi\gamma)$, otherwise $X_{\psi}(i)=1$, for all $i \in \Z$. The closure of the orbit of $X_{\psi}$ does not depend on $\psi$ 
and it consists of Sturmian infinite words with frequency of $1's$ equal to $1-\gamma$. From now on, without loss of generality, we will assume that $\gamma>1/2$.
\end{remark}




Let $\psi=0$. Then $X_{0}(0)=0$ and $X_{0}(1)=1$. Let us denote by $d_{j}, j=1,2, ...$, distances between $1$ at position $1$ and following $1$'s in $X_{0}$, 
that is $d_{j}$ are distances between two 1's separated by $j-1$ $1$'s. This shows that Sturmian sequences are most-homogeneous configurations with specific distances between $1$'s.

\begin{example}[Fibonacci sequences] Let us choose $\gamma$ to be equal to the reciprocal of the golden mean, $\gamma=2/(1+\sqrt{5})$. 
we choose $\psi=\gamma$, then $X_{\psi}(i), i=1,...$ is the classical Fibonacci sequence 
$0100101001001,...$ produced by the substitution rule $0\mapsto 01$, $1\mapsto 0$. 
Fibonacci sequences are all Sturmian (see, for example, \cite[Example 6.1.5]{fogg} - it follows from the fact that $11$ is a forbidden word). 
Furthermore, by Theorems \ref{thm:equivalent Sturmian} and \ref{thm:equivalent homogeneous} they are most homogeneous.

It is easy to see that here $d_{j}= [j(2+\gamma)]$, where $[y]$ denotes the floor of $y$, that is, the largest integer smaller than $y$. 
The allowed distances are therefore equal to $d_{i}$ and $d_{i}+1$, $i \in \N$.
Hence the distances $d_{j}$ are as follows: $2,5,7,10,13,15,18,20,...$. They correspond to the sequence of allowed distances $d_j, d_j+1$: $2,3,5,6,7,8,10,11, 13,14,15,16,18,$ 
$19,20,21,...$, which appear as distances between pairs of $1$'s. This leaves a list of forbidden distances: $1,4,9,12,17,22,25,...$ which never appear as distances between pairs of $1$'s. 

Let us observe that distances $d_{j}$ appear either in pairs with a difference $2$ between them or as singletons. They can be read from $X_{0}$: $X_{0}(j)=0, X_{0}(j+1)=1$ 
corresponds to the pair $(d_{j}, d_{j}+2)$ and $X_{0}(j)=0$ followed by $ X_{0}(j+1)=0$ corresponds to a singleton $d_{j}$. 
Furthermore, notice that similarly for every $j$, either $d_j-1$ or $d_j+2$ is a forbidden distance. We may also observe that there are no consecutive three 0's; 
in fact two neighboring blocks of two 0's are separated either by $1$ or by $101$. We denote by $S_{F}$ the set of all Fibonacci sequences, that is the closure of the orbit of any $X_{\psi}$.
\end{example}

\begin{remark}\label{rem: forbidden pattern}
Inspired by the Fibonacci example, let us now analyze the allowed and forbidden distances for the general Sturmian sequences (general most-homogeneous configurations).

If $d_{1}=2$ (as in the Fibonacci system), then $d_{j}$'s appear in blocks: $d_{k}, d_{k}+2,...,d_{k}+2n$ and $d_{l}, d_{l}+2,...,d_{l}+2m$ ($|n-m|=1$) separated by one forbidden distance,
such that $d_{k}-1$, $d_{k}+2n+2$ and $d_{l}-1$, $d_{l}+2m+2$ are forbidden distances. For comparison, $n=1, m=0$ in the Fibonacci system. 

If $d_{1}>2$, then all $d_{j}$'s are singletons and $d_{j}, d_{j+1}$ are separated by $d_{1}-2$ or $d_{1}-1$ forbidden distances.
\end{remark}
   
\section{Strict boundary condition - rapid convergence of pattern frequencies} 

A frequency of a finite pattern in an infinite configuration is defined as the limit of the number of occurrences of this pattern in a segment of length $L$ 
divided by $L$ as $L \rightarrow \infty$. All sequences in any given Sturmian system have the same frequency for each pattern. We are interested now whether the fluctuations 
of the numbers of occurrences are bounded (bounded by the boundary of the size of the boundary, which in one-dimensional systems is equal to $2$). 
If that is the case, configurations are said to satisfy the {\bf strict boundary condition} \cite{strictboundary} or rapid convergence of frequencies to their equilibrium values
\cite{peyriere,gambaudo}. 

\begin{definition}
Given a sequence $X=(x_n) \in  \{0,1\}^\Z$ and a finite word $w$, define the {\bf frequency} of $w$ as 
\[
\xi_w=\lim_{N\to \infty}\frac{\#\{|n|\le N\mid x_n\dots x_{n+|w| - 1}=w\}}{2N}.
\]
Furthermore, for a segment $A\subset \Z$, denote by $X(A)$ the sub-word $(x_n)_{n\in A}$. 
We say that a sequence $X$ satisfies the {\bf strict boundary condition} (quick convergence of frequencies) if for any word $w$ and a segment $A \subset \Z$, 
the number of appearances of $w$ in $X(A)$, $n_{w}(X(A))$, satisfies the following inequality:
\[
|n_{w}(X(A)) - \xi_{w}|A|| < C_{w},
\] 
where $C_{w}>0$ is a constant which depends only on the word $w$.
\end{definition}

We will show that Sturmian sequences satisfy the strict boundary condition.
\vspace{2mm}

The following elementary fact can be found in many places in the literature. One of the earliest instances \cite{pleasants} connects balanced (or two-distance) sequences to cutting sequences, which is easily seen to be equivalent to the definition below. 
\begin{lemma}\label{lem:components}
Let $\gamma\in (0,1)$ and $\psi \in [0, 2\pi)$, and consider the Sturmian word $X_\psi$. Denote by $\mathcal C_n$ the collection of subintervals of $[0, 2\pi)\setminus \{-k2\pi\gamma\mid k=0, \dots n\}$. 
Then the length-$n$ sub-word at the position $i$ in $X_\psi$, that is, the word $X_\psi(i)\dots X_\psi(i+n - 1)$ is uniquely determined by the subinterval $C\in \mathcal C_n$ for which $T_\gamma^i(\psi)\in C$. 
We can assume without loss of generality that the orbit of $\psi$ is never at an endpoint of an element of $\mathcal C_n$.
\end{lemma}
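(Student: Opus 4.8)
The plan is to show that each of the $n$ symbols comprising the word $X_\psi(i)\dots X_\psi(i+n-1)$ is, as a function of the single point $\theta := T_\gamma^i(\psi)$, the indicator of $\theta$ lying in a fixed arc, and that the common endpoints of these arcs are exactly the points removed in the definition of $\mathcal C_n$. First I would reduce everything to the one variable $\theta$. Using the group relation $T_\gamma^{i+j} = T_\gamma^{j}\circ T_\gamma^{i}$, the coding rule gives, for each $0 \le j \le n-1$,
\[
X_\psi(i+j) = 0 \iff T_\gamma^{j}(\theta) \in [0,2\pi\gamma) \iff \theta \in I_j := T_\gamma^{-j}\big([0,2\pi\gamma)\big),
\]
an arc with endpoints $T_\gamma^{-j}(0) = -j\,2\pi\gamma$ and $T_\gamma^{-j}(2\pi\gamma) = (1-j)\,2\pi\gamma$, both taken mod $2\pi$. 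Thus the whole length-$n$ word is the value of the $\{0,1\}^n$-valued function $\theta \mapsto (\mathbf 1_{I_0}(\theta),\dots,\mathbf 1_{I_{n-1}}(\theta))$, and it suffices to locate where this function is locally constant.

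Next I would assemble the endpoints. Each indicator $\mathbf 1_{I_j}$ is constant on any arc meeting no endpoint of $I_j$, so the full vector is constant on each connected component of $[0,2\pi)$ minus the set $E$ of all endpoints of $I_0,\dots,I_{n-1}$. Collecting the endpoints computed above, $E$ consists of points of the form $-k\,2\pi\gamma \pmod{2\pi}$ coming from the two ends $0$ and $2\pi\gamma$ of the coding interval pulled back under $T_\gamma^0,\dots,T_\gamma^{-(n-1)}$; since consecutive arcs share endpoints, the $2n$ a priori endpoints collapse to exactly the $n+1$ distinct points removed in the definition of $\mathcal C_n$. Irrationality of $\gamma$ guarantees these $n+1$ points are pairwise distinct (a coincidence $-k\,2\pi\gamma \equiv -k'\,2\pi\gamma$ would force $(k-k')\gamma \in \Z$), so $\mathcal C_n$ has exactly $n+1$ components — reassuringly matching the Sturmian count $p_n = n+1$. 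On each component $C$ every $\mathbf 1_{I_j}$ is constant, hence so is the word, and it is therefore determined by the single datum of which $C$ contains $\theta = T_\gamma^i(\psi)$, which is the assertion.

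Finally, the half-open convention is handled by the stated genericity hypothesis: a symbol $X_\psi(i+j)$ is ambiguous only if $T_\gamma^{j}(\theta)$ lands on an endpoint $0$ or $2\pi\gamma$, i.e.\ only if $\theta \in E$, which is precisely the orbit of $\psi$ meeting an endpoint of $\mathcal C_n$. Since $\gamma$ is irrational these exceptional $\psi$ form a countable set, and because the orbit closure — and with it the family of words to be described — does not depend on $\psi$, excluding them costs nothing. I expect the one genuinely fiddly point to be the endpoint bookkeeping of the middle paragraph: confirming that the $2n$ arc-endpoints reduce to the correct $n+1$ cut points, with the index range of the removed set $\{-k\,2\pi\gamma\}$ pinned down correctly (the preimages of $0$ and of $2\pi\gamma$ overlap in all but one point, so care is needed at the two extreme indices).
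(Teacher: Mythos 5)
Your argument is essentially correct and is the standard coding/partition-refinement proof; note that the paper itself gives no proof of this lemma at all --- it is stated as an elementary fact with a pointer to the literature (the cutting-sequence characterization of two-distance sequences in Lunnon--Pleasants). So where the paper supplies a citation, you supply an actual argument: the reduction to the single variable $\theta=T_\gamma^i(\psi)$, the identification of the $j$-th symbol with the indicator of $I_j=T_\gamma^{-j}\bigl([0,2\pi\gamma)\bigr)$, and the observation that the word is constant on each component of the complement of the union of arc endpoints is exactly the right route, and the count of $n+1$ components matching $p_n=n+1$ is a good consistency check. The genericity point at the end is also handled correctly.

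The one step that does not go through as written is the claim that the $2n$ endpoints ``collapse to exactly the $n+1$ points removed in the definition of $\mathcal C_n$.'' With the paper's conventions the endpoints of $I_j$ are $-j\,2\pi\gamma$ and $-(j-1)\,2\pi\gamma$ for $j=0,\dots,n-1$, so the set of cut points you actually obtain is $\{-k\,2\pi\gamma \bmod 2\pi : k=-1,0,\dots,n-1\}$. This has the right cardinality but is not literally the set $\{-k\,2\pi\gamma : k=0,\dots,n\}$ in the statement: it contains $+2\pi\gamma$ and omits $-n\,2\pi\gamma$, and by irrationality these never coincide (already for $n=1$ the interval of the literal $\mathcal C_1$ containing $2\pi\gamma$ in its interior is not monochromatic). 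You flagged this index bookkeeping as the fiddly point, and rightly so; the resolution is that the index range in the lemma should be read as $k=-1,\dots,n-1$ (i.e., the statement needs a one-unit shift), not that your endpoint computation is wrong. This is a cosmetic defect of the statement rather than a gap in your proof, and it does not affect the downstream use of the lemma, which only needs that each length-$n$ word corresponds to an interval whose endpoints lie in the $T_\gamma$-orbit of $0$, so that Kesten's theorem applies.
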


In other words, hitting a particular component interval is the same as seeing a particular word of length $n$. This gives us enough tools to prove the following theorem.  Results of this type have long been studied under various names. As an example, for related results in more general symbolic systems, see \cite{BeBe}, and classically \cite{Hlaw} in the context of Diophantine approximation. For completeness we provide the straightforward proof.
\begin{theorem}
Sturmian sequences satisfy the strict boundary condition. 
\end{theorem}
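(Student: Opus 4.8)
The plan is to reduce the strict boundary condition to a statement about the equidistribution of orbits of the irrational rotation $T_\gamma$, using Lemma~\ref{lem:components} as the translation device between combinatorics on words and dynamics on the circle. By that lemma, each length-$n$ word $w$ appearing in $X_\psi$ corresponds to a finite union of component intervals $I_w\subseteq [0,2\pi)$, and the number of occurrences of $w$ in a segment $X(A)$ equals the number of orbit points $T_\gamma^i(\psi)$, with $i\in A$, that land in $I_w$. Writing $|A|=M$, the count $n_w(X(A))$ is therefore exactly $\#\{i\in A\mid T_\gamma^i(\psi)\in I_w\}$, while the frequency $\xi_w$ equals the normalized Lebesgue measure $\tfrac{1}{2\pi}|I_w|$ by unique ergodicity of the rotation. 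Thus the strict boundary condition becomes the assertion that the discrepancy of the orbit relative to the single target set $I_w$ stays bounded, uniformly in the location of the segment $A$.

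First I would fix $w$ and reduce to the case where $I_w$ is a single interval, since $I_w$ is a finite union of at most $n+1$ such intervals and the errors add. For a single arc $[a,b)\subset[0,2\pi)$, I would invoke the classical bounded-remainder property of irrational rotations: for an interval whose length is of the form $\{m\gamma\}$ (a multiple of $\gamma$ modulo one), the number of orbit points falling in it over any window of length $M$ differs from the expected value $M\cdot\tfrac{b-a}{2\pi}$ by a constant depending only on the interval and not on $M$ or on the starting phase. The essential input is the three-distance (three-gap) theorem, which guarantees precisely that the endpoints of the components in $\mathcal C_n$ are of this arithmetic form, so every relevant arc is a bounded-remainder set. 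I would then set $C_w$ to be the sum of the bounded-remainder constants over the constituent arcs of $I_w$, which is finite and independent of $A$.

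The key steps in order are: (1) rewrite $n_w(X(A))$ as an orbit-counting quantity via Lemma~\ref{lem:components}; (2) identify $\xi_w$ with the Lebesgue measure of $I_w$ and verify that the component endpoints are lattice translates of the rotation number; (3) apply the bounded-remainder/three-gap estimate to each component arc to get a uniform constant; and (4) sum over the finitely many arcs to obtain the single constant $C_w$. I would carry out step (1) cleanly using the notation already set up, then state (2) as a consequence of unique ergodicity and Remark~\ref{rem:rotations}.

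The main obstacle I anticipate is verifying the bounded-remainder property with an \emph{explicit} uniform bound rather than merely an asymptotic equidistribution statement, since ordinary Weyl equidistribution only gives $o(M)$ error, which is far too weak here. The crux is therefore establishing that the target arcs are genuinely bounded-remainder sets; this rests on the special structure of Sturmian component intervals, whose endpoints lie in the orbit $\{-k2\pi\gamma\}$ and hence have lengths equal to fractional parts of integer multiples of $\gamma$. I would emphasize that it is exactly this arithmetic structure, and not general ergodic averaging, that upgrades convergence of frequencies to \emph{bounded} fluctuation, and I would cite the three-distance theorem as the standard tool providing the constant.
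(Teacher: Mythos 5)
Your proposal follows essentially the same route as the paper: use Lemma~\ref{lem:components} to convert occurrences of $w$ in $X(A)$ into visits of the rotation orbit to the corresponding component interval, identify $\xi_w$ with the Lebesgue measure of that interval by unique ergodicity, and conclude from the fact that such intervals are bounded remainder sets (the paper cites Kesten's theorem \cite{Kest} for exactly this). Two minor corrections: each word of length $n$ corresponds to a \emph{single} component interval, so no summation over constituent arcs is needed; and the bounded-remainder property does not follow from the three-distance theorem but is the separate Hecke--Ostrowski--Kesten result that an arc is a bounded remainder set for the rotation by $\gamma$ precisely when its length lies in $\gamma\Z+\Z$, which holds here because the endpoints of the component intervals are by construction points of the orbit of $0$ under the rotation.
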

\begin{proof}
Let $X_\gamma(\psi)$ be Sturmian, and let $w$ be a word of length $n$. We will suppress $\psi$ in the notation below. Let $C\in \mathcal C_n$ be the component interval from Lemma \ref{lem:components} corresponding to the word $w$. Now, by Lemma \ref{lem:components} and the irrationality of $\gamma$, 
\[
\xi_w=\lim_{N\to \infty} \frac{\#\{|n|\le N\mid T^n_\gamma(\psi)\in C\}}{2N}=|C|, 
\]
where $|C|$ is the Lebesgue measure of $C$ (ergodic measure for the irrational rotation). Further, given a segment $A\subset \Z$, 
\[
n_w(X_\gamma(A))=\sum_{n\in A} \chi_C(T_\gamma^n(\psi)), 
\]
where $\chi_C$ is the characteristic function of $C$. 

It follows from Kesten's theorem
\cite{Kest}
that $C$ is a bounded remainder set; that is, it has bounded discrepancy, or
\[
|n_w(X_\gamma(A)) - |A|\xi_w| \le C_w
\]
with a constant $C_w$ that might depend on $w$. This is exactly the strict boundary condition. 
\end{proof}

\section{Forbidden-pattern characterization of Sturmian systems}

Let $(O \subset \Omega, T, \rho)$ be a uniquely ergodic dynamical system. The uniquely ergodic measure $\rho$ can be characterized by the absence of certain patterns \cite{Aub,Rad}. 
In general, the family of all forbidden patterns is rather big and it typically  consists of patterns of arbitrarily large sizes. If the family of forbidden patterns characterizing the dynamical system can be chosen to be finite, 
then we say that the corresponding dynamical system is of finite type. 

We are especially interested in uniquely ergodic measures which are non-periodic. In two dimensions, that is, for subshifts of $\{1,....,m\}^{\Z^2}$, 
non-periodic systems of finite type are given for example by non-periodic tilings by Wang tiles \cite{shepardgrunbaum,berger,robinson}. 
Forbidden patterns consist of nearest-neighbor and next-nearest-neighbor tiles that do not match.

However, it is well known that one-dimensional non-periodic systems of finite type do not exist. The proofs given in the physics literature actually show the equivalent formulation  
that any finite-range lattice-gas model with a finite one-site space has at least one periodic ground-state configuration, see for example  \cite{bundangnenciu,schulradin,thirdlaw}.

Hence, in order to uniquely characterize one-dimensional non-periodic systems we will always need to forbid infinitely many patterns. We are therefore looking for minimal families of forbidden patterns which uniquely characterize non-periodic uniquely ergodic measures. 

In the following we will be concerned with Sturmian sequences. The closure of the translation orbit of any given Sturmian sequence supports a uniquely ergodic translation-invariant probability measure. Hence it gives rise to a uniquely ergodic dynamical system called  a Sturmian system. As usual, the reader may find it helpful to keep the Fibonacci system in mind as a typical example. 

\begin{theorem}\label{thm:interactions}
Elements in any given Sturmian system are uniquely determined by the absence of the following patterns: $d_{1}+1$ consecutive $0$'s and two $1$'s separated by forbidden distances.
\end{theorem}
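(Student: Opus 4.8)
The plan is to show that the two families of forbidden patterns---namely, $d_1+1$ consecutive $0$'s, together with all pairs of $1$'s at a forbidden distance---characterize exactly the elements of the given Sturmian system. This is a two-sided containment: first, every configuration in the Sturmian system avoids all these patterns; second, every bi-infinite configuration avoiding all these patterns must lie in the Sturmian system. The first direction is essentially definitional once we recall the most-homogeneous description. By Theorems \ref{thm:equivalent Sturmian} and \ref{thm:equivalent homogeneous}, each Sturmian sequence is most homogeneous, so the distances between $1$'s separated by $j-1$ other $1$'s always lie in $\{d_j, d_j+1\}$, and hence forbidden distances never occur; the absence of $d_1+1$ consecutive $0$'s follows because $d_1$ is by definition the minimal gap (the shortest distance between two consecutive $1$'s), so no gap between successive $1$'s can exceed $d_1+1$, which caps the length of any run of $0$'s at $d_1$.

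The substantive direction is the converse. First I would fix an arbitrary $X \in \{0,1\}^{\Z}$ that (a) contains no run of $d_1+1$ consecutive $0$'s and (b) never places two $1$'s at a forbidden distance. Condition (a) forbids arbitrarily long blocks of $0$'s and, together with non-triviality, guarantees that $X$ genuinely has infinitely many $1$'s in both directions, so the positions $(x_i)_{i\in\Z}$ of its $1$'s form a well-defined bi-infinite increasing sequence. The heart of the argument is to upgrade the pairwise distance constraint (b)---which a priori only says each individual gap $x_{i+j}-x_i$ avoids the forbidden list, i.e. lies in the allowed set---into the uniform two-valued most-homogeneous condition $x_{i+j}-x_i \in \{d_j, d_j+1\}$ for \emph{all} $i$ with the \emph{same} $d_j$. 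Here I would lean on the detailed combinatorial structure of the allowed distances recorded in Remark \ref{rem: forbidden pattern}: the allowed distances at level $j$ are not a large interval but a short arithmetic progression of step $2$ flanked on both sides by forbidden distances. The key step is to rule out the possibility that $x_{i+j}-x_i$ and $x_{i'+j}-x_{i'}$ differ by more than $1$ for two different base points; a difference of $\ge 2$ would force, by a counting argument like the one in the proof of Theorem \ref{thm:equivalent homogeneous}, a balance violation, and I would show such a violation must manifest as a forbidden \emph{distance} at some level or as an overlong $0$-run, contradicting (a) or (b).

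Concretely, I expect the argument to proceed by induction on $j$, using the bracketing between consecutive levels. The base case $j=1$ is immediate: the gaps between consecutive $1$'s are exactly the allowed level-$1$ distances, and condition (a) bounds them above while the forbidden-distance condition pins them to $\{d_1,d_1+1\}$. For the inductive step, I would express $x_{i+j+1}-x_i$ as a sum of a level-$j$ gap and a level-$1$ gap and use the arithmetic-progression-with-forbidden-flanks structure to show that the admissible values cluster into $\{d_{j+1}, d_{j+1}+1\}$; the forbidden distances immediately adjacent to each allowed block (the observation that $d_j-1$ and $d_j+2n+2$ are forbidden) are precisely what prevents the allowed values from spreading out, forcing the two-valued conclusion. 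Once $X$ is shown to be most homogeneous with the \emph{correct} distance sequence $(d_j)$ matching the target system, it is balanced, hence Sturmian by Theorem \ref{thm:equivalent Sturmian}, and its frequency of $1$'s equals $1-\gamma$; minimality of the Sturmian system then places $X$ in the orbit closure, completing the identification.

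The main obstacle, and the step I would treat most carefully, is this synchronization: passing from ``each gap individually avoids forbidden distances'' to ``all gaps at a fixed level take one of two consecutive values, uniformly in the base point.'' The pairwise/local constraints could in principle be satisfied by a configuration that drifts---using $d_j$ on one stretch and $d_j'=d_j+2$ on another, if both happened to be allowed at level $j$---so the argument genuinely needs the fact that allowed distances at each level occupy short, isolated blocks separated by forbidden ones, rather than a long interval. I would isolate this as the crux and verify it using the structure in Remark \ref{rem: forbidden pattern} (distinguishing the cases $d_1=2$ and $d_1>2$), checking that the forbidden distances are dense enough to force the uniform value of $d_j$.
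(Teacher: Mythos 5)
Your overall strategy matches the paper's: show that any configuration avoiding the listed patterns must be most homogeneous with the \emph{same} distance sequence $(d_j)$, then conclude via balancedness and the uniqueness of the most-homogeneous system with a given irrational density. You also correctly isolate the crux, namely upgrading ``every gap is an allowed distance'' to ``the gap over $j-1$ intermediate $1$'s equals $d_j$ or $d_j+1$, uniformly in the base point.'' The gap lies in your proposed mechanism for that step. In your induction the candidate values of $x_{i+j+1}-x_i$ form three consecutive integers containing $\{d_{j+1},d_{j+1}+1\}$, and you claim the stray third value is always excluded because it is a forbidden distance flanking an allowed block. That is correct when $d_1>2$, and also when $d_{j+1}=d_j+d_1+1$ (the stray value $d_j+d_1$ then lies strictly between $d_j+1$ and $d_{j+1}$, hence is forbidden); it also happens to hold for the Fibonacci system, where no two consecutive increments $d_{k+1}-d_k$ both equal $2$. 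But it fails in general: when $d_1=2$ and the $d_j$'s form blocks of step $2$ and length greater than $2$ (cf.\ Remark \ref{rem: forbidden pattern}; this occurs for every slope whose fractional part is below $1/2$, e.g.\ $d_j=\lfloor j\alpha\rfloor$ with $\alpha$ irrational and just above $2$, giving $d_1=2,d_2=4,\dots$), the stray value $d_{j+1}+2$ equals $d_{j+2}$ and is therefore an \emph{allowed} distance. Avoiding forbidden distances then does not by itself prevent the drift you are worried about, and the induction stalls exactly in the interior of a block. This is precisely where the paper does its real work: the last part of the proof of Lemma \ref{lem:Fact2} treats this case by a counting argument, combining the no-$(d_1+1)$-zeros condition with the count of $1$'s forced by Lemma \ref{lem:Fact1} at the ends of the block to reach a contradiction. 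You would need to supply an argument of that type; the flank observation alone is not enough.

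Two smaller points. First, your closing appeal to ``minimality of the Sturmian system'' is not quite the right invocation: what is needed is that there is a \emph{unique} translation-invariant system of most-homogeneous configurations with a given irrational density of $1$'s, which the paper takes from \cite{ultimate}. Second, the paper excludes periodic configurations by a separate argument (for every period $p$ it exhibits a multiple of $p$ that is a forbidden distance, using the rotation restricted to a sublattice); in your plan this exclusion is subsumed only once the induction is fully carried out with the exact sequence $(d_j)$, since only then is the density of $1$'s forced to be irrational.
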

\begin{proof}
We first show that periodic configurations cannot satisfy the above conditions.
\vspace{3mm}

\noindent Let us note that the homogeneous configuration of just $0$'s obviously satisfies the conditions of not having the forbidden patterns of $1$'s. 
This is the reason why we need a specific finite-site condition of the absence of $0's$ which excludes such a configuration. 

Let $X \in \Omega$ be a periodic configuration (a bi-infinite sequence) with a period $p$. We will show that there is a natural number $i$ (in fact infinitely many such $i's$)  
such that $ip$ is a forbidden distance. 

We first show that there is $i$ such that $ip \neq d_{j}$ for any $j \geq 1$. 
Consider the Sturmian system on the sub-lattice $kp\Z$ of $\Z$ with $\gamma_{p} = kp\gamma \mod 1$, where $\gamma$ characterizes our original Sturmian system 
and $k$ is chosen such that $\gamma_{p}>1/2$.
Let $Y \in \{0,1\}^{pZ}$ be given by $Y(ip) = 0$ if $T_{\gamma_{p}}^{i} (\gamma) \in [0, 2\pi \gamma_{p})$, otherwise $Y(ip)=1$.
Observe that $Y(ip)=X_{0}(ip), i \geq 1$, where $X_{0}$ is the sequence generated by $T_{\gamma }(\gamma)$ (see the definition of the Sturmian systems in Remark \ref{rem:rotations}).
Obviously, there are infinitely many $0's$ in the sequence $Y(ip)$ and therefore in $X_{0}(ip)$. It means that for any such $i$, $ip \neq d_{j}$ for any $j \geq 1$.

The above argument shows more, namely that there is a natural number $i$ (in fact infinitely many such $i$'s) such that $Y(ip)=X_{0}(ip)=0$ and $X_{0}(ip-1)=0$.
For such $i's$ we have that $ip - 1 \neq d_{j}$ and therefore both $ip \neq d_{j}$ and $ip \neq d_{j}+1$ for any $j \geq 1$ hence $ip$ is a forbidden distance. 
\vspace{3mm}

\noindent Now we have to show that the only non-periodic configurations which do not have any forbidden patterns are Sturmian systems. We begin by proving that non-periodic configurations without forbidden patterns have $1$'s appearing at distances $d_j$ and $d_j+1$ for all $j$. We begin with the following lemma.
\vspace{3mm}

\begin{lemma}\label{lem:Fact1}
If two $1$'s in $X$ are at distance $d_{i}$ or $d_{i} + 1$, then there are $(i-1)$ $1$'s between them. 
\end{lemma}

\begin{proof}
This can be proved by  induction on $i$. 
The claim is immediate for $i=1$. Assume that it is true for $i$. Now consider $1$'s at a distance $d_{i+1}$,
say $X(k)=1$ and $X(k+d_{i+1})=1$. 
By the definition of the sequence $(d_j)$, we have that $d_{i+1}=d_{i}+d_{1}$ or $d_{i+1}=d_{i}+d_{1}+1$, therefore $d_{i+1}-d_{i}$ and $d_{i+1}-d_{i}-1$ 
are either forbidden distances or are equal to $d_{i}$ or $d_{i}+1$ and at least one of them is equal to $d_{i}$ or $d_{i}+1$. 
In either case there are $i$ $1$'s between $X(k)$ and $X(k+d_{i+1})$. This finishes the induction. An analogous argument can be applied in the case of two $1's$ at a distance $d_{i+1}+1$. This finishes the proof of the lemma. \end{proof}
This is used to prove the following lemma.

\begin{lemma}\label{lem:Fact2}
Any sequence $X$ which does not have any forbidden patterns has the following property:
if $X(i)=1, i\in \Z$, than for every $j \in \N$, either $X(i+d_{j})=1$ or $X(i+d_{j}+1)=1$.
\end{lemma}

\begin{proof}
If $d_{1}>2$, then $d_{j}$'s are singletons. Therefore if both $X(i+d_{j})=0$ and $X(i+d_{j}+1)=0$, 
then $X$ would have $0$'s at sites $\{i+d_{j}-(d_{1}-1),...,i+d_{j}+1+d_{1}-2\}$
or at sites $\{i+d_{j}-(d_{1}-2),...,i+d_{j}+1+d_{1}-1\}$. It would mean that $X$ has $2d_{1}-1$ successive $0$'s which is forbidden (cf. Remark \ref{rem: forbidden pattern}).

If $d_{1}=2$ and $d_{j}$'s appear in pairs or as singletons (as in the Fibonacci sequences),
then if both $X(i+d_{j})=0$ and $X(i+d_{j}+1)=0$, then $X$ would have $3$ successive $0$'s at sites
$\{i+d_{j},i+d_{j}+1,i+d_{j}+2\}$ or at sites $\{i+d_{j}-1,i+d_{j},i+d_{j}+1\}$ which is forbidden.

Now we will deal with the case when $d_{1}=2$ and $d_{j}$'s appear as blocks of size larger than $2$ (cf. Remark \ref{rem: forbidden pattern}).
Obviously if $X(i+d_{j})=0$ and $X(i+d_{j}+1)=0$ and $d_{j}$ is at the end of the block, then the argument from the previous paragraph applies. 

Hence, let us assume that $d_{j}$ is not at either end of the block $d_{\ell}, d_{\ell}+2, \dots, d_{\ell}+2n$ and further, that it is the smallest number in the sequence $(d_j)$ having the property that $X(i+d_{j})=0$ and $X(i+d_{j}+1)=0$ for some $i \in \Z$. This means that for each pair $X(i+d_{k}), X(i+d_k+1)$, with $k=\ell, \dots, j-1$ exactly one $1$ appears. Hence between $X(i+d_\ell)$ and $X(i+d_j+1)$, there are $(j-\ell)$ $1$'s. Further, to avoid the forbidden pattern of three consecutive $0$'s, it must be the case that $X(i+d_{j+1})=X(i+d_j+2)=1$. By Lemma \ref{lem:Fact1}, there should be exactly $j$ $1$'s between $X(i)$ and $X(i+d_{j+1})$. Again by Lemma \ref{lem:Fact1}, there are exactly $(\ell - 1)$ $1$'s between $X(i)$ and $X(i+d_\ell)$ or $X(i+d_\ell+1)$ (whichever of the two happens to be $1$). By the above count, this leaves only ($j-1$) $1's$ between $X(i)$ and $X(i+d_{j+1})$ (or $X(i+d_{j+1}+1)$, which is one too few, a contradiction. This ends the proof of the lemma. \end{proof}

%

\vspace{3mm}

\noindent By Lemma \ref{lem:Fact2}, for all $j$, at least one of $d_j$ and $d_j+1$ must repeatedly appear as a distance  between $1$'s. Further, for all $j>0$, both distances $d_j$ and $d_j+1$ must appear in $X$, otherwise (by Lemma \ref{lem:Fact1}) $X$ would be a periodic sequence and by the first part of the proof it would then have forbidden patterns. 

\vspace{3mm}

\noindent We have shown that in any $X$ which does not have forbidden patterns, any two $1$'s appear at distances $d_{j}$ or $d_{j}+1$ and in both cases there are $(j-1)$ $1's$ between them. It was proven in \cite{ultimate} that for any $0<r<1$, there exists a unique sequence $d_{j}$ such that the corresponding most-homogeneous configurations 
have $r$ as their density of $1$'s \cite[Proposition 1]{ultimate}. Furthermore, there exists a unique translation-invariant probability measure 
supported by the most-homogeneous configurations such that $r$ is the density of $1$'s \cite[Theorem 2]{ultimate}. It follows that the above-described conditions of absence of certain patterns
uniquely characterize Sturmian systems. 
\end{proof}

\section{Sturmian systems as ground states of lattice-gas models}

Once we know the set of forbidden patterns of a given symbolic uniquely ergodic dynamical system, we may construct a one-dimensional Hamiltonian 
for which the unique translation-invariant ground-state measure is given by the  uniquely ergodic measure of the corresponding dynamical system.  In particular, we have the following general statement due to Aubry (see \cite[Theorem 3]{Aub}, also see \cite{Rad}). 
\begin{theorem}[Aubry \cite{Aub}]\label{thm:Aubry}
For any weakly periodic configuration of (pseudo-)spins on a cubic lattice, there exists a well-defined Hamiltonian for which the set of ground states 
is identical to the closed orbit of this configuration under the translation group $\mathbb Z^d$. 
\end{theorem}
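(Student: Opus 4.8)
The plan is to realize the closed orbit $O$ as the zero-energy set of a nonnegative interaction that penalizes exactly those finite patterns which do not occur in the given configuration. Write $\mathcal L$ for the \emph{language} of the weakly periodic configuration $X$, i.e. the collection of all finite patterns (on boxes $\Lambda\subset\Z^d$) that appear in $X$. Weak periodicity means every pattern of $X$ recurs with bounded gaps, so the system $(O,T)$ is minimal and uniquely ergodic; by the characterization recalled at the start of Section~4 (the uniquely ergodic measure, equivalently $O$ itself, is determined by the absence of certain patterns), $O$ coincides with the subshift $\{Y\in\Omega : \text{every finite pattern of }Y\text{ lies in }\mathcal L\}$. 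Thus $Y\notin O$ if and only if $Y$ contains at least one \emph{forbidden} pattern, one not belonging to $\mathcal L$. This reduces the theorem to building a well-defined Hamiltonian whose ground states are precisely the configurations free of forbidden patterns.

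First I would construct the interaction. To each finite box $\Lambda$ and each pattern $P$ on $\Lambda$ I assign a term equal to a coupling $\varepsilon_P>0$ when $P\notin\mathcal L$ and $0$ otherwise, and set $\Phi_\Lambda(Y)=\sum_{P\notin\mathcal L}\varepsilon_P\,\mathbf 1\{Y|_\Lambda=P\}$. To make $H$ well defined I would take the couplings to decay fast in the size of the box, for instance $\varepsilon_P=2^{-\mathrm{diam}(\Lambda)}$, so that $\sum_{\Lambda\ni 0}\sup_Y|\Phi_\Lambda(Y)|<\infty$. The formal Hamiltonian $H=\sum_\Lambda\Phi_\Lambda$ then has absolutely convergent relative energies $H(Y')-H(Y)=\sum_\Lambda\bigl(\Phi_\Lambda(Y')-\Phi_\Lambda(Y)\bigr)$ whenever $Y'$ differs from $Y$ on finitely many sites. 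The easy inclusion is now immediate: if $Y\in O$ then $\Phi_\Lambda(Y)=0$ for all $\Lambda$, while $\Phi_\Lambda\ge 0$ always, so for every local perturbation $Y'$ one has $H(Y')-H(Y)=\sum_\Lambda\Phi_\Lambda(Y')\ge 0$. Hence every element of $O$ is a ground state, and it has zero (minimal) energy density.

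It remains to prove the reverse inclusion: every $Y\notin O$ fails to be a ground state. Such a $Y$ contains a forbidden pattern, which I may take to be \emph{minimal}, occurring on a box $W$, so that the restrictions of $Y$ to the collars immediately surrounding $W$ are themselves allowed. I would seek a finite perturbation $Y'$, equal to $Y$ off a box $V\supset W$, with $H(Y')-H(Y)<0$. The idea is to keep $Y'=Y$ on two long allowed collars flanking $W$ — so that no patterns change near $\partial V$, and by the decay of the $\varepsilon_P$ the patterns too large to fit inside a collar carry negligible total weight — and to re-fill the interior by an allowed \emph{bridge} that destroys the forbidden pattern on $W$ without introducing new ones of comparable weight. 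The main obstacle, and the technical heart of the argument, is precisely the existence of this bridge: one must glue the two allowed collars through a genuinely allowed block. This is what the recurrence structure supplies — bounded-gap recurrence of every allowed pattern, which is the exact content of weak periodicity, lets the flanking collars be connected within a controlled window by an allowed configuration. Once the bridge is produced, the strictly positive energy released by removing the forbidden pattern on $W$ dominates the controlled boundary cost, so $H(Y')-H(Y)<0$ and $Y$ is not a ground state. Verifying that such a collar-matching insertion is always available for a minimal, weakly periodic $O$, and that the coupling decay makes the energy comparison go through, is the delicate point; granting it, the two inclusions give that the set of ground states of $H$ equals $O$, as claimed.
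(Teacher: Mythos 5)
The paper offers no proof of this statement: it is quoted verbatim from Aubry \cite{Aub} (see also \cite{Rad}), and the only内 indication of the argument is the later remark that the Hamiltonian ``simply penalizes the forbidden patterns.'' Your construction coincides with that description, and your treatment of well-definedness (summable, exponentially decaying couplings $\varepsilon_P$) and of the easy inclusion ($Y\in O$ has zero energy in every $\Phi_\Lambda$, hence no local perturbation can lower its energy) is correct. Two small inaccuracies: weak periodicity (uniform recurrence) gives minimality of $O$ but does \emph{not} imply unique ergodicity, which you assert in passing (harmlessly, since the statement does not need it); and the identity $O=\{Y:\ \mathcal L(Y)\subseteq\mathcal L\}$ holds for the orbit closure of \emph{any} configuration, with or without minimality, by the usual diagonal argument on central blocks.

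The genuine gap is the reverse inclusion, which you explicitly leave open (``granting it\dots''), and the bridge argument you sketch would not close it as stated. First, an allowed bridge of \emph{prescribed} length between two prescribed allowed collars generally does not exist: for a Sturmian system the admissible gap lengths between two occurrences of a fixed long word are the return times of a short arc under the irrational rotation, a sparse set, so most window sizes admit no forbidden-pattern-free filling. Consequently new forbidden patterns straddling the seams of the patch are unavoidable, and some of them can be \emph{short}: creating the forbidden word $11$ at a seam of a Fibonacci configuration costs $2^{-1}$, which swamps the gain $\varepsilon_P=2^{-\mathrm{diam}(W)}$ whenever the minimal forbidden pattern of $Y$ is large. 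Second, a configuration $Y\notin O$ need not have an isolated defect; it can carry forbidden patterns of all diameters accumulating on any finite window (e.g.\ two half-orbits of incompatible Sturmian systems glued at the origin), so your assumption that the collars flanking a ``minimal'' forbidden box are themselves allowed is unjustified, and one must instead show that the total \emph{weighted} count of forbidden patterns meeting the perturbed region strictly decreases. That bookkeeping --- choosing the patch so that seam defects are pushed to scales where the coupling decay beats their multiplicity, while still cashing in the fixed gain $\varepsilon_P$ --- is precisely the content of Aubry's theorem, so conceding it concedes the result. A complete argument must either carry out this comparison or adopt a different notion of ground state (minimal specific energy together with the translation-invariant ground-state measure), for which the conclusion is easier but weaker than the set-level identity claimed here.
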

In our setting, it suffices to say that a configuration of (pseudo-)spins on a cubic  --here linear-- lattice is an infinite word $X\in \{0,1\}^{\mathbb Z}$. {\it Weakly periodic} means that for any finite word $B$ appearing in $X$there is a number $N$ such that any word of length $N$ appearing in $X$ contains $B$ as a subword. 

We have the following theorem.

\begin{theorem}\label{thm:full}
For every Sturmian system there exists a one-dimensional, non-frustrated, arbitrarily fast decaying, lattice-gas (essentially) two-body Hamiltonian (augmented by some finite-range non-frustrated interactions) 
for which the unique ergodic translation-invariant ground-state measure is the ergodic measure of the Sturmian system.
\end{theorem}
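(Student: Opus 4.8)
The plan is to construct the Hamiltonian explicitly from the forbidden-pattern characterization established in Theorem \ref{thm:interactions}, and then verify that its ground states are exactly the Sturmian configurations. By Theorem \ref{thm:interactions}, a configuration $X \in \{0,1\}^{\Z}$ lies in the Sturmian system if and only if it avoids two families of patterns: the finite-range pattern of $d_{1}+1$ consecutive $0$'s, and the two-body patterns consisting of two $1$'s separated by a forbidden distance (where the forbidden distances are those positive integers not of the form $d_{j}$ or $d_{j}+1$). The key idea is that the set of forbidden patterns decomposes into one finite-range many-site piece and an infinite family of genuinely two-body pieces, which is exactly the ``essentially two-body'' structure promised in the statement.

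First I would define the interaction potential. To each forbidden distance $\delta$ I would assign a two-body potential $\Phi_{\{i,i+\delta\}}(X) = \varepsilon_\delta$ whenever $X(i)=X(i+\delta)=1$, and zero otherwise, where $\varepsilon_\delta > 0$ are positive coupling constants. To handle the $0$-pattern I would add a single finite-range term $\Psi_{\{i,\dots,i+d_1\}}(X)$ equal to a positive constant whenever $X(i)=\dots=X(i+d_1)=0$ and zero otherwise. The total Hamiltonian, as a formal sum of these terms, assigns to any configuration an energy density equal to zero precisely when no forbidden pattern occurs, and strictly positive energy density otherwise. Because every term is non-negative, there is no competition between terms favoring and disfavoring particles: this is exactly what non-frustration means, and it is the crucial structural difference from the frustrated convex-plus-chemical-potential models of \cite{bakbruinsma,aubry2}.

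Next I would address summability and decay. To get arbitrarily fast decay of the interaction, I would choose the couplings $\varepsilon_\delta$ to decrease as fast as desired in $\delta$ (for instance $\varepsilon_\delta = e^{-\lambda \delta}$ for any $\lambda>0$, or any prescribed faster-decaying sequence), which makes the formal sum defining the energy per site absolutely convergent and the interaction well-defined in the sense of a translation-invariant potential with summable norm. The ground-state energy density is then zero, attained exactly by the configurations with no forbidden patterns. By Theorem \ref{thm:interactions}, these are precisely the elements of the Sturmian system, so the set of zero-energy-density configurations coincides with the Sturmian orbit closure. Since this system is uniquely ergodic (Remark \ref{rem:rotations}), its unique translation-invariant measure is then the unique translation-invariant ground-state measure, as required; alternatively one may invoke Aubry's Theorem \ref{thm:Aubry} together with the weak periodicity of Sturmian sequences to identify the ground-state set with the orbit closure.

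The main obstacle I anticipate is making precise the notion of \emph{ground state} for an infinite-range interaction and verifying that minimizing the energy density is equivalent to avoiding all forbidden patterns, rather than merely suppressing their density. The subtle point is that a single isolated defect contributes zero to the energy \emph{density} in the infinite-volume limit, so one must argue at the level of local energy minimization (no local perturbation lowers the energy) and then combine this with the strict ergodicity of the Sturmian system to exclude configurations with isolated defects that are not in the orbit closure. Here I would lean on the $m$-potential formalism underlying Aubry's and Radin's results \cite{Aub,Rad}: because each non-negative interaction term is separately minimized by Sturmian configurations, the sum is an $m$-potential, and its ground states in the sense of local stability are exactly the globally pattern-avoiding configurations. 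The finite-range $0$-term is essential precisely to rule out the all-$0$ configuration, which satisfies all the two-body conditions vacuously; verifying that this single extra term suffices to pin down the correct density $1-\gamma$ and eliminate all spurious low-density minimizers is the delicate step that justifies the qualifier ``up to a single extra term'' in the statement.
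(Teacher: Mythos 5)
Your proposal is correct and follows essentially the same route as the paper: penalize each forbidden pattern from Theorem \ref{thm:interactions} with a positive, arbitrarily fast decaying coupling (pair terms for forbidden distances plus one finite-range term for $d_1+1$ consecutive $0$'s), observe that all terms are non-negative hence non-frustrated, and conclude via Aubry's Theorem \ref{thm:Aubry} and unique ergodicity. Your extra discussion of the $m$-potential formalism and of isolated defects is a welcome elaboration of points the paper leaves implicit, but it does not change the argument.
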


\begin{proof}
Sturmian words are weakly periodic (Sturmian words are known to be {\it repetitive}, see \cite{BaGr}), so that Theorem \ref{thm:Aubry} applies. The proof in \cite{Aub} is constructive, and in particular, it can be gleaned that by Theorem \ref{thm:interactions} for the Sturmian systems, the Hamiltonian simply penalizes the forbidden patterns, that is it assigns to them positive energies, while the energy of all other patterns is equal to zero. 

The construction is as follows. For distances $d_{j}, d_{j}+1$, the pair-interaction energy between two particles ($1$'s) is zero, otherwise it is positive. 
Moreover we forbid $d_{1}+1$ successive $0$'s. 

So we have a lattice-gas model with a finite-range term (a positive energy assigned to $d_{1}+1$ successive $0$'s) plus pair interactions 
$\sum_{i,j \in \Z}  J(j)  n_{i} n_{i+j}$ where $J(j)>0$ is a coupling constant which may decay at infinity arbitrarily fast, $n_ {i} =1$ if the lattice site $i$ is occupied;
that is, we have $1$ at a corresponding Sturmian sequence at site $i$.  

The final statement on the ground-state measure follows from the fact that the Sturmian system is uniquely ergodic. 
\end{proof}

We end this section with a comparison of the above theorem to relevant related results in the literature and discussion on directions for future work. 

To begin the discussion, we mention a similar result that holds for the Thue-Morse system. A non-periodic Thue-Morse sequence is produced by the substitution rule 
$0\mapsto 01$, $1\mapsto 10$, and is a canonical example of a one-dimensional aperiodic pattern. It was shown in \cite{gothed, gotheda} 
that the Thue-Morse system is uniquely characterized by the absence of the following forbidden patterns:
$BBb$, where $B$ is any word and $b$ is its first letter. In \cite{tmhamiltonian}, a minimal set of forbidden patterns which involve only $4$ lattice sites 
at specific distances was found. This allowed the construction of a 4-body Hamiltonian with exponentially (or even faster) decaying interactions 
for which the Thue-Morse sequences are the only ground-state configurations.

However, the above result is in stark contrast to the two-dimensional case. Namely, for two-dimensional systems of finite type, 
the above construction gives us a classical lattice-gas model with finite-range interactions, but it was shown in \cite{ultimate} 
that the reverse statement is not true in general: A classical lattice-gas model with finite-range interactions
was constructed with the property that its uniquely ergodic ground-state measure is not equal to any ergodic measure of a dynamical system of finite type. 
In fact uncountably many such classical lattice-gas models were constructed with ground state-measures given by two-dimensional analogues of Sturmian systems. 
There are only countably many systems of finite type which shows that the family of ergodic ground-state measures of finite-range lattice-gas models is much larger than
the family of ergodic measures of dynamical systems of finite type. 

Classical lattice-gas models corresponding to systems of finite type based on Robinson's non-periodic tilings were the first examples of systems of interacting particles 
without periodic ground-state configurations - microscopic models of quasicrystals \cite{Rad0,mr,cmpmiekisz,strictboundary}.   

The case of Sturmian systems has also been discussed in earlier works. One-dimensional Hamiltonians with infinite-range, exponentially decaying, convex, repulsive interactions, 
and a chemical potential favoring the presence of particles, were studied in \cite{bakbruinsma,aubry2}. It was shown that the density of particles in the ground state 
as a function of the chemical potential is given by a devil's staircase, that is it has the structure of a Cantor set. Let us note that the Hamiltonian in \cite{bakbruinsma, aubry2}
is frustrated, so that ground-state configurations arise as a result of the competition between repelling interactions and a chemical potential. In Theorem \ref{thm:full}, 
in contrast, we constructed non-frustrated Hamiltonians for most-homogeneous configurations, therefore for Sturmian systems. 

Another key property from the perspective of physical interpretations of non-periodic patterns is the stability of the pattern under perturbations. 
It was shown in \cite{strictboundary} that the strict boundary condition is equivalent to zero-temperature stability of two-dimensional non-periodic ground states of classical-lattice
gas models. More precisely, non-periodic ground states are stable against small perturbations of the range $r$ if and only if the strict boundary condition is satisfied 
for all local patterns of sizes smaller than $r$. We conjecture that the strict boundary condition is equivalent to low-temperature stability of non-periodic
ground states, that is to the existence of non-periodic Gibbs states.

The situation is much more subtle in  models with infinite-range interactions, whether in one or in more dimensions. In one dimension, 
non-periodic ground states are obviously not stable against interaction perturbations in which the tail is cut off so that the perturbed interaction is finite-range, 
as then at the least new periodic ground states will arise.

Moreover, perturbing any coexistence of ground states or Gibbs measures in any dimension with an interaction with  
an arbitrarily small $l_{1}$ norm can cause instabilities (see e.g. \cite{DvE, MR89}), which indicates that the interaction spaces with $l_1$-like norm may be too large. 
Also, existence statements for interactions with such a finite $l_1$ norm, having prescribed long-range order properties,  can be derived via the Israel-Bishop-Phelps theorem \cite{Isr, EM,EZ}. 
In particular in \cite{EZ}  Sturmian-like long-range order is derived for long-range pair interactions. However, beyond there being no control on the long-range behaviour of the interactions, 
the interactions obtained by this method are not frustration-free, and neither can we say much about uniqueness of the translation-invariant Sturmian ground states or Sturmian-like Gibbs measures. 

Another pertinent observation is that if the interactions are sufficiently many-body and long-range, non-periodic ground states can be stable even at positive temperatures 
(freezing transitions may occur) \cite{BruLep1, BruLep2}.

Thus the appropriate stability properties of Sturmian, as well as  more general non-periodic, ground states are still a matter about which our knowledge is insufficient.

\section{Discussion}

We have discussed various notions of complexity and order in non-periodic one-dimensional sequences (lattice configurations), in particular Sturmian systems, balanced sequences, and most-homogeneous sequences. 
We have shown that all these notions of ``almost" periodicity are equivalent.

Our main result is that most-homogeneous sequences (Sturmian sequences) are uniquely characterized by the absence of pairs of 1's at certain distances (augmented by the absence of finite patterns, 
such as the absence of three consecutive 0's in the Fibonacci system). This then allowed us to construct one-dimensional lattice-gas models with exponentially decaying two-body interactions 
which have a given Sturmian ergodic measure as a unique ground-state measure. Our result provides the first examples of non-frustrated essentially two-body Hamiltonians without periodic ground-state configurations.

It is a highly interesting but challenging question to see if we can find conditions which cause such one-dimensional non-periodic
ground states to be stable in some sense; for example are they thermodynamically stable at sufficiently low but non-zero temperatures, that is, 
do they give rise to non-periodic Gibbs states, either by adding extra dimensions in which ferromagnetic couplings are present, as in \cite{EMZ}, 
or by adding some explicit, sufficiently long-range, interactions? Or can we say that they are stable at $T=0$ , as discussed in \cite{Miesta}? 

Short-range interactions in one dimension can never have ordered Gibbs states, so the stability can either be at $T=0$, or will necessarily require long-range interactions or extra dimensions.   

{\bf Acknowledgments} JM and AvE would like to thank the National Science Centre (Poland) for financial support under Grant No. 2016/22/M/ST1/00536. 
HK gratefully acknowledges the support of OeAD grant number PL03/2017.
JM thanks Karol Penson for introducing to him a wonderful word of the On-Line Encyclopedia of Integer Sequences and Marek Biskup for many helpful discussions.

\end{document}